\theoremstyle{plain}
\newtheorem{theorem}{Theorem}[section]
\newtheorem{lemma}[theorem]{Lemma}
\theoremstyle{definition}
\theoremstyle{remark}
\numberwithin{equation}{section}
\numberwithin{figure}{section}
\begin{document}

\title{Non-associative magnetic translations: A QFT construction}

\author{Jouko Mickelsson}
\address{Department of Mathematics and Statistics, University of Helsinki}

\email{jouko@kth.se}

\maketitle
\begin{abstract} 

The non-associativity of translations in a quantum system with magnetic field
background has received renewed interest in association with topologically trivial
gerbes over $\mathbb{R}^n.$ The non-associativity is described by a 3-cocycle
of the group $\mathbb{R}^n$ with values in the unit circle $S^1.$ 
The gerbes over a space $M$ are topologically classified by the Dixmier-Douady
class which is an element of $\mathrm{H}^3(M,\mathbb{Z}).$ However, there is a finer
description in terms of local differential forms of degrees $d=0,1,2,3$ and
the case of the magnetic translations for $n=3$ the 2-form part is the magnetic field
$B$ with non zero divergence.  In this paper we study a quantum field theoretic 
construction in terms of $n$-component fermions on a real line or a unit circle.
The non associativity arises when trying to lift the translation group action
on the 1-particle system to the second quantized system. 

MSC classification:  81T50 (primary); 22E67, 81R15, 22E70 (secondary)

\end{abstract}

\section{Introduction}

The motivation for the present short note is to understand the recent paper by
Bunk, M\"uller and Szabo \cite{BMS} in terms of quantization of Dirac operators
on a real line or on the circle coupled to an abelian vector potential with gauge
group $\mathbb{R}^n$ or the torus $T^n=\mathbb{R}^n/\mathbb{Z}^n.$
The central topic in \cite{BMS} is a 3-cocycle on $\mathbb{R}^n$ arising from composing certain
functors coming from translations acting on differential data of a topologically
trivial gerbe on $\mathbb{R}^n.$ The non-associativity in the case of a magnetic field with
sources in the case $n=3$ was suggested already long ago in \cite{Jac}. An interpretation
of the 3-cocycle in terms of representations of canonical anticommutator algebras
was then proposed in \cite{Car}.

In this paper we interpret the magnetic translations as (non periodic) gauge transformations on real line acting on fermions with $n$ complex components. They actually define true operators on the level
of 1-particle Dirac operators. However, they cannot be lifted to unitary
operators in the fermionic Fock space; if they could, there would be no 3-cocycle
since the composition of linear operators is associative. Nevertheless, these
gauge transformations define functors acting on certain categories of representations
of canonical anticommutation relations. The composition of functors respects
the group law in $\mathbb{R}^n$ only modulo the action of automorphisms in the
Fock space; these automorphisms come from a projective representation of an abelian
gauge group. 

Let $G$ be a  simply connected Lie group.  Let $H$ be the space of square integrable functions
on the real line taking values in the complex vector space $\mathbb{C}^n$ with a
unitary $G$ action through a representation $\rho$ of $G.$  The group $LG$ of smooth $G$ valued functions $f$ on $\mathbb{R}$
such that $f(t)$ is constant $g$ outside of a compact set  acts unitarily on $H.$
Using the stereographic projection from the unit circle to the real axis we can
actually identify $LG$ as a subgroup of the  smooth loop group on the unit circle.

Let $C_g$ be the category of smooth  paths $f,$ parametrized by a closed interval of the real axis, in $G$ starting from the unit element $e$
and with the end point $g$  with vanishing derivatives at
the points $e,g.$  Morphisms in the category $C_g$ are smooth homotopies of paths with
fixed end points. Next we choose a representation of the canonical anticommutation relations in a fermionic Fock space $\mathcal{F}_f$ 
with a Fock vacuum defined by a polarization $H= H_{+}(f) \oplus H_-(f)$ of $H.$ 
Starting from the polarization $H=H_+\oplus H_-$ defined by the Fourier decomposition
to non negative and negative Fourier modes we set $H_+(f) = f\cdot H_+$ and $H_-(f)$
its orthogonal complement.  [Alternatively, for the purposes of the present note, we could consider polarizations of the one dimensional Dirac operator $D_f = i\frac{d}{dx}+ + i f^{-1}df$].  

The fixed polarization $H_+ \oplus H_-$ defines a representation of the canonical
commutation relations generated by the elements
$a^*(v), a(v)$ for $v\in H$ with nonzero anticommutation relations 
$$a^*(u) a(v) + a(v) a^*(u) = 2 <u,v>_H$$
and a Fock vacuum $\psi$ with $a^*(u) \psi =0 = a(v)\psi$ for $u\in H_-$ and $v\in H_+.$ 

We have a functor from the category of paths $C_g$ to the category of CAR algebra
representations $\mathcal{F}(g)$ sending $f$ to $\mathcal{F}_f.$ The CAR representations
in the category $\mathcal{F}(g)$ are all equivalent. Two paths $f,f' \in C_g$ are related by
a point-wise multiplication by an element $h$ of the loop group $LG.$ An element of the loop
group is represented as an unitary operator $T(h)$ in the Fock space and
$T(h) a^*(v) T(h)^{-1} = a^*(hv).$ However, the operator $T(h)$ is fixed only up to a
phase due to the fact that the loop group is projectively represented, through a
central extension $\widehat{LG}.$ For this reason the functor $F$ is projective in the sense that morphisms
in the category $C_g$ go over to morphisms (unitary equivalences) in $\mathcal{F}(g)$
respecting the composition only up to a phase.

Each element $g_1\in G$ defines a functor $F_{g_1}: C_g \to C_{gg_1}$ as follows.
Fix a path $f_1$ joining $e$ to $g_1.$ Take any $f\in C_g,$ parametrized by an
interval $[a,b].$ Parametrize $f_1$ by an interval $[b,c].$ Then joining the
two paths gives a new path $f*f_1$ as follows: First travel $f$ until the end point
$g= f(b). $ Then continue with $t\mapsto f(b)f_1(t)$ for $b \leq t \leq c$ ending
at $gg_1.$  The functor $F_{g_1}$ from $C_g$ to $C_{gg_1}$ defines naturally also
a functor from $\mathcal{F}(g)$ to $\mathcal{F}(gg_1).$ Namely, the path $f_1$ joining
$e$ to $g_1$ defines an automorphism of the CAR algebra by $a(v) \mapsto a(gf_1v)$ 
taking a representation in the category $\mathcal{F}(g)$ to a representation in
the category $\mathcal{F}(gg_1).$ 

Next fix a pair $g_1, g_2 \in G.$ Choose as above a pair of paths $f_1,f_2$
parametrized by the intervals $[b,c]$ and $[c,d]$ correspondingly. On the other hand,
we have a path $f_{12}$ joining $e$ to $g_1 g_2.$ Finally, we have a loop $\ell(g_1, g_2)$ by composing $f_1 * f_2 * f_{12}^{-1};$ recall that functions on the real line
constant outside of a compact set can be identified as elements of the loop group.
The last factor involves the point-wise 
inverse of the function travelled in the opposite direction.  This loop construction
is similar but different from the construction in \cite{Mi09} where a functorial
approach to group 3-cocycles was discussed.  

The functor $F(g)$ can be represented as an operator by a point-wise multiplication
in the 1-particle Hilbert space $H.$ However, it does not define an operator in
a Fock space. The reason is that the off-diagonal blocks of the 1-particle operator
are not Hilbert-Schmidt with respect to energy polarization due to the non periodicity
of the path; this is seen by a simple
Fourier analysis using the polarization defined by $D= i\frac{d}{dx}.$

It follows directly from the definition that we have the 2-cocycle property
\begin{equation}
 \ell(g_1, g_2) \ell(g_1g_2,g_3) = \,\,^{g_1}\ell(g_2,  g_3)  \ell(g_1,g_2g_3).
\label{loopcocycle} \end{equation}
where $^g\ell$ denotes the left translate of the loop $\ell$ by $g.$ A remark about the
parametrizations:  The first loop on the left connects the points $e, g_1, g_1g_2, e$
at the parameter points $0, a_1, a_1 + a_2, 0$ and the second loop on the left
connects $e, g_1g_2, g_1g_2 g_3, e$ at parameter values $0, a_1 +a_2, a_1 +a_2 +a_3, 0$
so the point-wise product connects $e, g_1, g_1g_2, g_1 g_2 g_3, e$ at parameter values
$0, a_1, a_1 +a_2, a_1 +a_2 + a_3, 0$; the paths between $e, g_1g_2$ in each factor
cancelling since they are inverse of each other. The reader can check that the same result is obtained for the product on the right in \eqref{loopcocycle}.

Using the construction in \cite{Mi87} for any loop $\ell\in LG$ we can fix an element
in the standard central extension $\widehat{LG}$ of $LG$ by $S^1$ by choosing 
an extension $\tilde{\ell}$ to the unit disk; on the boundary $\tilde{\ell}$ is
equal  to $\ell.$ As a circle bundle, the central extension consists of equivalence
classes of pairs $(\tilde{\ell},\lambda)$ with $\lambda\in S^1$ with the equivalence
relation
$$(\tilde{\ell},\lambda) \sim (\tilde{\ell}', \lambda')$$
with $\lambda' = \lambda e^{2\pi i\int_V \Omega}$
where $V$ is a volume in $G$ with boundary obtained by glueing the surfaces $\tilde{\ell},\tilde{\ell'}$
along the common boundary $\ell$ and $\Omega$ is a representative of a class in
$\mathrm{H}^3(G, \mathbb{Z}).$ 

The 2-cocycle property above fails for the lifts of the loop group elements
to the central extension $\widehat{LG}.$ A triple $g_1,g_2, g_3$ determines
through the choices of the loops $\ell$ in \eqref{loopcocycle} and their extensions
$\tilde{\ell}$ a tetraed with faces given by the four extensions $\tilde{\ell}(g_1,g_2),
\tilde{\ell}(g_1g_2, g_3), \tilde{\ell}(g_2, g_3), \tilde{\ell}(g_1, g_2g_3).$
This closed 2-surface $\Sigma$ in $G$ is then equivalent to the phase
$$c(g_1, g_2, g_3)=\exp{2\pi i \int_V \Omega}$$ 
where $V$ is the volume in $G$ with boundary $\Sigma.$ This is the 3-cocycle which
comes from the extensions of the loops in \eqref{loopcocycle} to the central extension,
\begin{equation}
 \tilde{\ell}(g_1, g_2) \tilde{\ell}(g_1g_2) = [f_1 \tilde{\ell}(g_1, g_2 g_3) {f_1}^{-1}] \tilde{\ell}(g_2,g_3)
 \times c(g_1, g_2, g_3) \label{3cocycle}
 \end{equation}
 Because of the different choices made in the construction of \eqref{3cocycle} the
 cocycle $c$ is smooth only in an open neighborhood of the unit element
 in $G.$
 
 \section{The case of $G= \mathbb{R}^n$}
 
 The group $G=\mathbb{R}^n$ has an unitary representation in the Hilbert space
 $H$ of square integrable functions on $\mathbb{R}$ with values in $\mathbb{C}^n$ 
 through multiplication $z_k \mapsto e^{i x_k} z_k$ for $k=1,2, \dots ,n.$
 This defines also an action of the loop group $LG$ in $H$ through point-wise
 multiplication by the phase $e^{i x_k(t)}.$ 
 
 \begin{lemma} The group of continuous piecewise smooth loops satisfies the
 Hilbert-Schmidt condition on off-diagonal blocks for the energy polarization
 $H=H_+\oplus H_-.$ \end{lemma}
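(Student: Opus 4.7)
The plan is to translate the off-diagonal Hilbert-Schmidt condition into a Sobolev regularity statement via Fourier analysis, and then verify that regularity under the stated hypotheses.

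Since $G=\mathbb{R}^n$ acts diagonally on $\mathbb{C}^n$, it suffices to treat a single component at a time, so I would take $x:\mathbb{R}\to\mathbb{R}$ continuous and piecewise smooth with $x$ constant outside a compact set, and set $\ell=e^{ix}$. Write $c=\lim_{|t|\to\infty}\ell(t)$ and $g=\ell-c$; the compactly supported correction $g$ is where all the action lies, because $c\cdot\mathrm{id}$ is diagonal for every polarization and therefore contributes nothing to the off-diagonal blocks of $M_\ell$.

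Next I would pass to the Fourier side. The energy polarization defined by $D=i\partial_x$ corresponds to the splitting $L^2(\mathbb{R})=L^2(\mathbb{R}_{\ge 0})\oplus L^2(\mathbb{R}_{<0})$ of frequency space, and multiplication by $g$ becomes convolution with $\hat g$. Thus $P_+M_gP_-$ has integral kernel $\mathbf{1}_{\xi\ge 0}\mathbf{1}_{\eta<0}\hat g(\xi-\eta)$, and the change of variables $u=\xi-\eta$ gives
$$\|P_+M_gP_-\|_{HS}^2=\int_0^\infty\!\!\int_{-\infty}^0|\hat g(\xi-\eta)|^2\,d\eta\,d\xi=\int_0^\infty u\,|\hat g(u)|^2\,du,$$
with the symmetric computation for $P_-M_gP_+$ producing $\int_{-\infty}^0|u|\,|\hat g(u)|^2\,du$. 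The Hilbert-Schmidt condition on both off-diagonal blocks is therefore equivalent to the homogeneous Sobolev condition $g\in \dot H^{1/2}(\mathbb{R})$.

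To close the argument, note that under the hypothesis $x'$ is bounded and compactly supported, so $\ell'=ix'\ell$ is bounded and compactly supported, and so is $g=\ell-c$. Both $g$ and $g'$ thus lie in $L^2(\mathbb{R})$, giving $g\in H^1(\mathbb{R})\subset H^{1/2}(\mathbb{R})$, and the required Hilbert-Schmidt estimate follows; for the full group one just sums over the $n$ components. The main technical step is the Fourier-side identification of the off-diagonal Hilbert-Schmidt norm with a weighted $L^2$ norm of $\hat g$; after that, the Sobolev regularity check for a continuous piecewise smooth loop, constant at infinity, is routine.
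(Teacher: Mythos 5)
Your proof is correct, but it follows a genuinely different route from the paper. The paper also works on the Fourier side, but its mechanism is pointwise decay of matrix elements: it integrates by parts twice on each interval of smoothness to show $\langle p|\ell|q\rangle = O\bigl((p-q)^{-2}\bigr)$ (the first set of boundary terms cancels by continuity and periodicity, the second set survives because $\ell'$ may jump but already carries the factor $(p-q)^{-2}$), and then concludes from $\int_{p>\gamma,\,q<-\gamma}(p-q)^{-4}\,dp\,dq<\infty$. You instead compute the off-diagonal Hilbert--Schmidt norm exactly, $\|P_+M_gP_-\|_{HS}^2=\int_0^\infty u\,|\hat g(u)|^2\,du$, reducing the lemma to the classical criterion that $[P_+,M_g]$ is Hilbert--Schmidt iff $g\in\dot H^{1/2}$, and then verify $g\in H^1$. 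Your version buys several things: it needs only one derivative in $L^2$ (the paper implicitly uses piecewise $C^2$ via the second integration by parts), it is sharp (characterizing exactly which multipliers work, well beyond piecewise smooth), and subtracting the constant value at infinity cleanly disposes of the non-decaying part of $\ell$, which the paper handles somewhat loosely by switching between the real-line and periodic pictures. The paper's argument, on the other hand, is more elementary and makes visible the role of the jump discontinuities of $\ell'$, which is the point the author wants to emphasize for the piecewise-linear triangle loops $\ell(x,y)$ used later. Both proofs are valid.
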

 
 \begin{proof} We set $n=1;$ the multicomponent case is proven in a similar way.
 The Fourier components of the multiplication operator $\ell(t)$ can be estimated
 by integration by parts:  restricting to any interval $[a,b]$ where $\ell$ is smooth
 we get for momenta $p,q$ of opposite sign
 $$ <p|\ell|q> = \int_a^b  e^{i(p-q)t} \ell(t) dt
 = \int_a^b \frac{1}{i(p-q)} e^{i(p-q)t} \ell'(t) dt + \frac{1}{i(p-q)}e^{i(p-q)t} \ell(t) \arrowvert_a^b.$$
 The valuations at the end points cancel when summing over all intervals for
 a periodic continuous function $\ell.$    In the first term on the right we can repeat
 the integration by parts. Now since the derivative $\ell'$ might be discontinuous
 at the end points of the intervals the insertion terms do not cancel. However,
 they involve the same factor $(p-q)^{-2}$  as in the integration term involving
 $\ell''(t).$ The Hilbert-Schmidt condition follows taking the square and observing
 that
 $$\int_{p  > \gamma, q < -\gamma}  \frac{1}{(p-q)^4} dp dq < \infty$$
 for any positive $\gamma,$ and likewise for $p < -\gamma, q > \gamma.$
 \end{proof}
 
For any $x\in \mathbb{R}^n$ choose the path $f_x$ as the straight line from the origin
to the point $x.$ Then proceeding as the general case above for any pair of vectors
$x,y\in \mathbb{R}^n$ we have the closed loop $\ell(x,y)$ as the triangle with vertices at $0, x, x+y.$ According to the Lemma this piece-wise smooth loop is represented
as a unitary operator in the Fock space defined by the energy polarization 
of the free Dirac operator $i\frac{d}{dt}$ on the real line. 

Next fix a a closed 3-form on $\mathbb{R}^n$ by $\Omega= \sum_{i,j,k} a_{ijk} dx_i \wedge
dx_j \wedge dx_k$ where $a$ is any antisymmetric tensor. This closed form is exact,
$\Omega = dB$ with $B= \sum_{i,j,k} a_{ijk} x_i dx_j \wedge dx_k.$ The forms $\Omega, B $
define a topologically trivial gerbe over $\mathbb{R}^n.$ 

For a pair $x,y$ of vectors the loop $\ell(x,y)$ is the boundary of a triangle
$\tilde \ell(x,y)$ and for a triple $x,y,z\in \mathbb{R}^n$ we have a tetraed
$V(x,y,z)$ with faces consisting of the triangles $\tilde{\ell}(x,y), \tilde{\ell}_x(y,z), \tilde{\ell}(x+y, z), \tilde{\ell}(x+y, z)$ where $\ell_x$ denotes the triangle $\ell$
translated by the vector $x.$ Thus the vertices of the tetraed $V(x,y,z)$ are located 
at the points $0, x, x+y, x+y+z.$ We observe
$$\int_{V(x,y,z)}  \Omega = \sum a_{ijk} x_i y_j z_k.$$ 
In particular, when $n=3$ and $a_{ijk} = \epsilon_{ijk}$  the value of the integral
is equal to the volume of the tetraed $V(x,y,z).$ As before, the corresponding 3-cocycle
is
$$c(x,y,z) = e^{2\pi i \sum_{ijk} a_{ijk} x_i y_j z_k}.\label{3cocycle} $$
Although this cocycle for (nonzero $a$) is nontrivial as a group cocycle it is
however trivial as a transformation groupoid cocycle: The group $\mathbb{R}^n$ 
acts on itself by translations and $c= \delta b$ for the the 2-cochain
$b(u; x,y) = c(u,x,y)$
with
$$(\delta b)(x,y,z) = b(u; x,y)^{-1}  b(u; x+y, z)^{-1}  b(u; x, y+z) b(u+x; y,z)$$

The cocycle $c$ is equal to the identity if all the vectors belong to the subgroup
$\mathbb{Z}^n \subset \mathbb{R}^n.$ In that case all the functors corresponding to the edges of the tetraed are actually loops in $T^n = \mathbb{R}^n/\mathbb{Z}^n$ 
and are represented by unitary operators in the Fock space. 

Following the rules of the canonical quantization of bounded operators in the 1-particle Hilbert space satisfying the Hilbert-Schmidt condition on the off-diagonal blocks
with respect to the energy polarization the Lie algebra is represented projectively 
in the fermionic Fock space. The projective action is characterized by the 2-cocycle \cite{Lu} 
$$c_2(f,g) = \frac12 \text{TR} \,f [\epsilon, g].$$
The trace is computed in the 1-particle Hilbert space. In particular, when the Lie
algebra consist of multiplication operators by smooth functions (on a circle or
on the real line, constant outside a compact set)  we have
$$\frac12 \text{TR} \, f[\epsilon, g] = \frac{1}{2\pi i} \int \text{tr} f dg$$
 where the second trace is evaluated in the representation $\rho$ of $G$ in $\mathbb{C}^n.$
 
In the present setting the loops take values in $\mathbb{R}^n$ and $\rho(x)z_k =
e^{ix_k} z_k$ for $k=1,2,\dots, n.$ 
Each component defines a circle value function $e^{i f(t)}$ acting as a multiplication operator
in the 1-particle Hilbert space $H=L_2(\mathbb{R}, \mathbb{C}^n).$ The cocycle $c_2$ is nontrivial on the abelian
loop group. However, in the case of a family of Dirac operators $D_A = i \frac{d}{dt}
+ A$ coupled to an abelian vector potential $A$ (with values in $\mathbb{R}^n$) the cocycle becomes trivial:
we have $c_2 = \delta b_1$ where 
$$b_1(A; X) = \frac{1}{4\pi i} \int \sum_k A_k X_k dt$$
and the loop algebra element $X$ acts on $A$ through the gauge transformation $A \mapsto A + dX.$ 
For this reason the bundle of Fock spaces parametrized by the vector potentials
becomes equivariant with respect to the gauge action and can be pushed forward
to a bundle over the flat moduli space $\mathbb{R}^n= \mathcal{A}/\mathcal{G}$ of gauge potentials;
here $\mathcal{G}$ is the group of periodic functions $f: \mathbb{R} \to \mathbb{R}^n$
(that is, $f$ is a constant outside of a compact set)
acting on potentials as 
$A\mapsto A+  df.$ 

\section{The case of a torus}

If we replace the gauge group $\mathbb{R}^n$ by the torus $T^n$ the situation
becomes different. All the maps $f:\mathbb{R}  \to \mathbb{R}^n$ which are periodic
modulo $\mathbb{Z}$ (that is, the asymptotic values of $f$ on the right in $\mathbb{R}$ are related to the values on the left by a shift in $\mathbb{Z}^n$) satisfy the Hilbert-Schmidt condition on off-diagonal blocks
with respect to the energy polarization; again, a function $f$ defines a multiplication
operator in the one-particle space through $z_k \mapsto e^{2\pi i  f_k}z_k.$ These functions $f$ van be viewed as loops $S^1 \to T^n.$
Now the group of gauge transformations $\mathcal{G}$ factorizes as a product of the
group $\mathcal{G}_0$ contractible
maps to $T^n$ (represented as loops on $\mathbb{R}^n$) and a group $\mathbb{Z}^n$
of maps  of
the form $f(t) = 0$ for $t\leq 0,$ $f(t) = t v$ for $0\leq t\leq 1$ with $v \in \mathbb{Z}^n$ and $f(t) =v$ for $t\geq 1.$ 

The moduli space of gauge potentials $\mathcal{A}/\mathcal{G}$ is now
the torus $T^n;$ we have $\mathcal{A}/\mathcal{G}_0 = \mathbb{R}^n$ and the second
factor in $\mathcal{G}$ is isomorphic to the subgroup $\mathbb{Z}^n \subset \mathbb{R}^n.$ In the case of $\mathbb{R}^n$ there was no restriction on the normalization of
the 3-cocycle [as a group cocycle or as a 3-form on $\mathbb{R}^n$] but in the case
of the torus the 3-cocycle must satisfy an integrality constraint in order that
the gerbe over $T^n$ is well-defined.

As explained in \cite{HaMi}, \cite{Mi17} (see also \cite{MiWa} Section 7) the 1-particle  Dirac hamiltonians can
be twisted in such a way that their K-theory class over the moduli space $T^n$ 
is nontrivial: the Chern character has a nonzero component $\omega_3$ in $\mathrm{H}^3(T^n, \mathbb{Z}).$ The basis in $\mathrm{H}^3(T^n,\mathbb{Z})$ is given by the
3-forms $\omega_3 = \sum a_{ijk} dx_i \wedge dx_j\wedge dx_k$ where the $a's$ form
a basis of totally antisymmetric tensors of rank 3 with integral coefficients,
The pull-back with respect to the projection $\mathbb{R}^n \to T^n$ is the form
$\sum a_{ijk} x_i dx_j\wedge dx_k.$ The quantum field theoretic construction
of a gerbe over the torus from a non zero class $[\omega_3]$ is recalled in the
Appendix. 

The 3-form part $\omega_3$ of the Chern character is the Dixmier-Douady class of
the projective vector bundle over $T^n$ obtained by canonical quantization of the
family of 1-particle Dirac operators. The pull-back of this bundle over $\mathbb{R}^n$
comes by projectivization of a vector bundle (the bundle of fermionic Fock spaces).
The group $\mathbb{Z}^n$ acts through an abelian extension of the Fock spaces.
The extension is defined by the 2-cocycle
$$ c_2(u; x, y)= e^{2\pi i \sum a_{ijk} u_i x_j y_k}$$
where $x,y\in \mathbb{Z}^n$ and $u\in\mathbb{R}^n$ and $\mathbb{Z}^n$ acts on
the functions of the vector $u$ as translations.

The 3-cocycle \eqref{3cocycle} is identically $=1$ when the arguments are in $\mathbb{Z}^n
\subset \mathbb{R}^n$ in conformity with the (projective) action of $\mathbb{Z}^n$
on the Fock spaces.

{\bf Remark} The cocycle $c_2$ is also a group cocycle even in the case of constant
coefficients (no group action on $u$) but since the coboundary operator is
different  the cohomology with variable coefficients is different from the cohomology
with constant coefficients.

\section{Appendix} 
In this appendix we briefly recall the quantum field theoretic construction of a
gerbe over the torus using a twisted family of CAR algebra representations,
\cite{HaMi}, \cite{Mi17}.

A hermitean complex line bundle $L$ over the torus $T^n$ is characterized by a class $\omega$ in
$\mathrm{H}^2(T^n, \mathbb{Z}).$ Parametrizing the circles in the torus by the interval
$[0,1]$  the 2-cohomology is spanned  by antisymmetric bilinear forms on $\mathbb{R}^n$ such that $\omega(x,y) \in \mathbb{Z}$ for $x,y\in
\mathbb{Z}^n.$ The pull-back of $L$ over $\mathbb{R}^n$ is trivial and
the sections of that line bundle are complex valued functions $\psi$ such that
$$\psi(x + z) = \psi(x) e^{2\pi i \omega(x,z)}$$
for $z\in \mathbb{Z}^n.$

Next we construct a family of fermionic Fock spaces parametrized by vectors in
$\mathbb{R}^n.$ For each $k=1,2, \dots, n$ and $u,v\in H$ let $a_k(v), a^*_k(u)$
be generators of a CAR algebra with nonzero anticommutators
$$a^*_k(u) a_k(v) + a_k(v) a^*_k(u) = 2 <u,v>.$$
The generators for different lower indices are assumed to  commute. It will be
convenient to compactify the real line to the unit circle so we can take $H=
L_2(S^1, \mathbb{C}^n)$ and we can work with the orthonormal basis of Fourier modes
in each of the $n$ directions.

We twist the CAR algebra by the line bundle $L.$ This means that the families of creation
and annihilation operators are sections of the tensor products of $L$ or its dual and the CAR algebra. 
The sections are $\mathbb{Z}^n$ equivariant functions on $\mathbb{R}^n,$ 
that is, for  $x\in\mathbb{R}^n$  and  for $z\in\mathbb{Z}^n$
$$ a^*_k(u, x+z) = a^*_k(u,x) e^{2\pi i \omega(x,z)}, \,\, a_k(u,x+z) = a_k(u,x) e^{-2\pi i \omega(x,z)}.$$
The right-hand-side of the canonical anticommutation relations, when evaluated at a point
$x\in\mathbb{R}^n,$ is multiplied by the pairing of sections of $L,L^*$ involved in the
construction of $a^*_k(u,x) = a^*_k(u)\otimes \psi(x)$ and of $a_k(v)\otimes \xi.$

The  Fock vacuum is again annihilated by $a^*(u,x)$ and $a(v,x)$ for $u\in H_-$ and
$v\in H_+.$ (One could generalize this construction by allowing the modes for 
different lower index $k$ be twisted by different line bundles.) 

Thus the states with net particle number $N$ in the Fock space are twisted by 
the $N$:th tensor power of $L.$  

The group $\mathbb{Z}^n$ acts as automorphisms of the twisted  CAR algebra by
$$g(p) a^*(u,x) g(p)^{-1}= a^*(p\cdot u, x+p) = e^{2\pi i \omega(x,p)} a^*(p\cdot u,x)$$
where $p$ acts on a function $u(\zeta)$ by multiplication by a phase, $u_k
\mapsto e^{2\pi i \zeta p_k} u_k,$
that is, the Fourier modes  are shifted by $p$ units.
Likewise, for the annihilation operators
$$g(p) a(u,x) g(p)^{-1} = e^{-2\pi i \omega(x,p)} a((-p)\cdot u, x).$$
The action of $g(p)$ in the Fock spaces parametrized by $x$ is now completely defined by
fixing the action on the vacuum vector $\psi.$ This is easiest done thinking the vectors
as elements in the semi-infinite cohomology (in physics terms, the 'Dirac sea').
For $n=1$ the vacuum is symbolically the semi-infinite  product
$$ \psi = a^*_0 a^*_{-1} a^*_{-2} \cdots$$
where the lower index refers to the Fourier modes in $L_2(S^1).$ For general $n$ the
vacuum is defined in a similar way inserting the non-negative Fourier modes for each
of the $n$ components.  The CAR generators are labelled by a double index $(k,j)$
with $k\in\mathbb{Z}$ and $j=1,2,\dots n. $ The action of $g(p)$ on the vacuum is now defined as a shift
operator: The index $k$ of the element $a^*_{k,j}$  is shifted by the integer  $p_j$ for $j=1,2, \dots, n,$ $k\mapsto k+ p_j.$ 
 
Because of the phase shifts when the CAR algebra generators are conjugated by $g(p)$
the product $g(p) g(q)$ is not equal to $g(p+q)$ but they differ by a $x$ dependent phase,
$$g(p)g(q) = C(x; p,q) g(p+q)= e^{2\pi i N \omega(x,p)} g(p+q)$$
where $N$ is the particle number of the state $g(q)\psi,$ that is, $N= \sum_{j=1}^{j=n} q_j.$

{\bf Remark} The projective vector bundles over $T^n$ are classified by elements of
$\mathrm{H}^3(T^n, \mathbb{Z}.$ Representatives of these elements can be written as
de Rham forms $\Omega = \sum a_{ijk} dx_ \wedge dx_j \wedge dx_k$ where the coefficients
$a_{ijk}$ are integers. The pull-back of $\Omega$ with respect to the projection
$\pi: \mathbb{R}^n \to T^n$ is $\pi^*\Omega = d\theta= d\sum a_{ijk} x_i dx_j \wedge dx_k.$
Evaluating $\theta$ for tangent vectors $u,v$ in the integral lattice $\mathbb{Z}^n
\subset \mathbb{R}^n$ and exponentiating gives the 2-group cocycle
$$C'(x; u,v) = e^{2\pi i \sum a_{ijk} x_i u_j v_k}$$
where the group $\mathbb{Z}^n$ acts on the vector $x$ by $x\mapsto x+u.$
According to the discussion in \cite{MiWa}, Section 7.1, there is a 1-1 correspondence
between the group cohomology $\mathrm{H}_{grp}^2(\mathbb{Z}^n, A)$ and the de Rham
cohomology $\mathrm{H}^3(T^n, \mathbb{Z})$ where $A$ is the $\mathbb{Z}^n$ module
of (smooth) functions $T^n\to S^1.$ The above map $\{c_{ijk}\} \to C'$
realizes this isomorphism.

{\bf Example}: When $n=3$ the cocycle $C$ is equivalent to $C'$ for the choice
$a_{ijk} =\alpha \epsilon_{ijk}$ where 
where $\epsilon$
is totally antisymmetric tensor with $\epsilon_{123}=1$ and $\alpha = 2(\omega_{12}
+\omega_{23} + \omega_{31})$ where $\omega= \omega_{12} dx_1\wedge dx_2 + \omega_{31}
dx_3\wedge dx_1 + \omega_{23}dx_2\wedge dx_3.$
This is seen by projecting the exponent in $C$ to its totally antisymmetric
component.

 \end{document}